\documentclass[11pt,letterpaper]{article}
\usepackage{lmodern}
\usepackage[T1]{fontenc}
\usepackage[utf8]{inputenc}
\usepackage{sty}

\title{Counterexamples to the Low-Degree Conjecture}

\usepackage{authblk}
\author[1]{Justin Holmgren\thanks{Email: \texttt{holmgren@alum.mit.edu}.}}
\author[2]{Alexander S.\ Wein\thanks{Email: \texttt{awein@cims.nyu.edu}. Partially supported by NSF grant DMS-1712730 and by the Simons Collaboration on Algorithms and Geometry.}}

\affil[1]{Simons Institute for the Theory of Computing}

\affil[2]{Department of Mathematics, Courant Institute of Mathematical Sciences, New York University}

\begin{document}

\maketitle

\begin{abstract}
A conjecture of Hopkins~\cite{sam-thesis} posits that for certain high-dimensional hypothesis testing problems, no polynomial-time algorithm can outperform so-called ``simple statistics'', which are low-degree polynomials in the data. This conjecture formalizes the beliefs surrounding a line of recent work that seeks to understand statistical-versus-computational tradeoffs via the \emph{low-degree likelihood ratio}. In this work, we refute the conjecture of Hopkins~\cite{sam-thesis}. However, our counterexample crucially exploits the specifics of the noise operator used in the conjecture, and we point out a simple way to modify the conjecture to rule out our counterexample. We also give an example illustrating that (even after the above modification), the symmetry assumption in the conjecture is necessary. These results do not undermine the low-degree framework for computational lower bounds, but rather aim to better understand what class of problems it is applicable to.
\end{abstract}

\section{Introduction}

A primary goal of computer science is to understand which problems can be solved by efficient algorithms. Given the formidable difficulty of proving unconditional computational hardness, state-of-the-art results typically rely on unproven conjectures. While many such results rely only upon the widely-believed conjecture $\complexityP \neq \NP$, other results have only been proven under stronger assumptions such as the unique games conjecture \cite{khot-ugc,ugc-survey}, the exponential time hypothesis~\cite{ImpagliazzoP01}, the learning with errors assumption~\cite{Regev09}, or the planted clique hypothesis \cite{jerrum-clique,BR-reduction}.

It has also been fruitful to conjecture that a specific \emph{algorithm} (or limited class of algorithms) is optimal for a suitable class of problems. This viewpoint has been particularly prominent in the study of average-case noisy statistical inference problems, where it appears that optimal performance over a large class of problems can be achieved by methods such as the sum-of-squares hierarchy (see \cite{sos-survey}), statistical query algorithms \cite{Kearns93,BlumFJKMR94}, the approximate message passing framework \cite{AMP,LKZ-AMP}, and low-degree polynomials \cite{HS-bayesian,sos-hidden,sam-thesis}. It is helpful to have such a conjectured-optimal meta-algorithm because this often admits a \emph{systematic} analysis of hardness. However, the exact class of problems for which we believe these methods are optimal has typically not been precisely formulated. In this work, we explore this issue for the class of low-degree polynomial algorithms, which admits a systematic analysis via the low-degree likelihood ratio.

The \emph{low-degree likelihood ratio} \cite{HS-bayesian,sos-hidden,sam-thesis} has recently emerged as a framework for studying computational hardness in high-dimensional statistical inference problems. It has been shown that for many ``natural statistical problems,'' all known polynomial-time algorithms only succeed in the parameter regime where certain ``simple'' (low-degree) statistics succeed. The power of low-degree statistics can often be understood via a relatively simple explicit calculation, yielding a tractable way to precisely predict
the statistical-versus-computational tradeoffs in a given problem. These ``predictions'' can rigorously imply lower bounds against a broad class of spectral methods~\cite[Theorem~4.4]{lowdeg-notes} and are intimately connected to the \emph{sum-of-squares hierarchy} (see~\cite{sos-hidden,sam-thesis,sos-survey}). Recent work has (either explicitly or implicitly) carried out this type of low-degree analysis for a variety of statistical tasks \cite{p-cal,HS-bayesian,sos-hidden,sam-thesis,graph-matching,sk-cert,lowdeg-notes,subexp-sparse,BB,lifting-sos,heavy-tail}. For more on these methods, we refer the reader to the PhD thesis of Hopkins~\cite{sam-thesis} or the survey article~\cite{lowdeg-notes}.

Underlying the above ideas is the belief that for certain ``natural'' problems, low-degree statistics are as powerful as all polynomial-time algorithms -- we refer broadly to this belief as the ``low-degree conjecture''.
However, formalizing the notion of ``natural'' problems is not a straightforward task. Perhaps the easiest way to illustrate the meaning of ``natural'' is by example: prototypical examples (studied in the previously mentioned works) include planted clique, sparse PCA, random constraint satisfaction problems, community detection in the stochastic block model, spiked matrix models, tensor PCA, and various problems of a similar flavor. All of these can be stated as simple hypothesis testing problems between a ``null'' distribution (consisting of random noise) and a ``planted'' distribution (which contains a ``signal'' hidden in noise). They are all high-dimensional problems but with sufficient symmetry that they can be specified by a small number of parameters (such as a ``signal-to-noise ratio''). For all of the above problems, the best known polynomial-time algorithms succeed precisely in the parameter regime where simple statistics succeed, i.e., where there exists a $O(\log n)$-degree polynomial of the data whose value behaves noticeably different under the null and planted distributions (in a precise sense). Thus, barring the discovery of a drastically new algorithmic approach, the low-degree conjecture seems to hold for all the above problems. In fact, a more general version of the conjecture seems to hold for runtimes that are not necessarily polynomial: degree-$D$ statistics are as powerful as all $n^{\tilde \Theta(D)}$-time algorithms, where $\tilde \Theta$ hides factors of $\log n$ \cite[Hypothesis~2.1.5]{sam-thesis} (see also~\cite{lowdeg-notes,subexp-sparse}).

A precise version of the low-degree conjecture was formulated in the PhD thesis of Hopkins~\cite{sam-thesis}. This includes precise conditions on the null distribution $\nu$ and planted distribution $\mu$ which capture most of the problems mentioned above. The key conditions are that there should be sufficient symmetry, and that $\mu$ should be injected with at least a small amount of noise. Most of the problems above satisfy this symmetry condition (a notable exception being the spiked Wishart model\footnote{Here we mean the formulation of the spiked Wishart model used in~\cite{sk-cert}, where we directly observe Gaussian samples instead of only their covariance matrix.}, which satisfies a mild generalization of it), but it remained unclear whether this assumption was needed in the conjecture. On the other hand, the noise assumption is certainly necessary, as illustrated by the example of solving a system of linear equations over a finite field: if the equations have an exact solution then it can be obtained via Gaussian elimination even though low-degree statistics suggest that the problem should be hard; however, if a small amount of noise is added (so that only a $1-\varepsilon$ fraction of the equations can be satisfied) then Gaussian elimination is no longer helpful, and the low-degree conjecture seems to hold.

In this work we investigate more precisely what kinds of noise and symmetry conditions are needed in the conjecture of Hopkins~\cite{sam-thesis}. Our first result (Theorem~\ref{thm:refute}) actually refutes the conjecture in the case where the underlying random variables are real-valued. Our counterexample exploits the specifics of the noise operator used in the conjecture, along with the fact that a single real number can be used to encode a large (but polynomially bounded) amount of data. In other words, we show that a \emph{stronger} noise assumption than the one in~\cite{sam-thesis} is needed; Remark~\ref{rem:OU} explains a modification of the conjecture that we do not know how to refute. Our second result (Theorem~\ref{thm:sym}) shows that the symmetry assumption in \cite{sam-thesis} cannot be dropped, i.e., we give a counterexample for a weaker conjecture that does not require symmetry. Both of our counterexamples are based on efficiently decodable error-correcting codes.

\subsection*{Notation}

Asymptotic notation such as $o(1)$ and $\Omega(1)$ pertains to the limit $n \to \infty$. We say that an event occurs with \emph{high probability} if it occurs with probability $1-o(1)$, and we use the abbreviation w.h.p.\ (``with high probability''). We use $[n]$ to denote the set $\{1,2,\ldots,n\}$. The \emph{Hamming distance} between vectors $x,y \in F^n$ (for some field $F$) is $\Delta(x,y) = |\{i \in [n] \,:\, x_i \ne y_i\}|$ and the \emph{Hamming weight} of $x$ is $\Delta(x,0)$.

\section{The Low-Degree Conjecture}

We now state the formal variant of the low-degree conjecture proposed in the PhD thesis of Hopkins~\cite[Conjecture 2.2.4]{sam-thesis}. The terminology used in the statement will be explained below.

\begin{conjecture}\label{conj:sam}
Let $\Omega$ be a finite set or $\RR$, and let $k \ge 1$ be a fixed integer. Let $N = \binom{n}{k}$. Let $\nu$ be a product distribution on $\Omega^N$. Let $\mu$ be another distribution on $\Omega^N$. Suppose that $\mu$ is $S_n$-invariant and $(\log n)^{1+\Omega(1)}$-wise almost independent with respect to $\nu$. Then no polynomial-time computable test distinguishes $T_\delta \mu$ and $\nu$ with probability $1-o(1)$, for any $\delta > 0$. Formally, for all $\delta > 0$ and every polynomial-time computable $t: \Omega^N \to \{0,1\}$ there exists $\delta' > 0$ such that for every large enough $n$,
\[ \frac{1}{2} \PP_{x \sim \nu}(t(x) = 0) + \frac{1}{2} \PP_{x \sim T_\delta \mu}(t(x) = 1) \le 1 - \delta'. \]
\end{conjecture}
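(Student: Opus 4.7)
The plan is to refute Conjecture~\ref{conj:sam} by exploiting the fact that when $\Omega = \RR$, a single coordinate can encode $\Omega(\log n)$ bits of side information without distorting its marginal. Taking $k=1$ (so $N = n$) and $\nu = \mathrm{Unif}([0,1))^n$, I will construct a distribution $\mu$ in which each coordinate $x_i$ simultaneously encodes (i) a hidden position $\pi^{-1}(i) \in [n]$ for a uniformly random permutation $\pi$, and (ii) the bit at that position of a random codeword $c = C(m)$ of a good error-correcting code. The random $\pi$ enforces $S_n$-invariance; the large dual distance of $C$ gives almost-independence on up to $D := (\log n)^{1+\Omega(1)}$ coordinates with respect to $\nu$; and the efficient decodability of $C$ is what a polynomial-time distinguisher exploits to defeat the resampling noise $T_\delta$.

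Concretely, assume $n = 2^\ell$ and fix a binary linear code $C \colon \{0,1\}^L \to \{0,1\}^n$ with (a) rate $L/n$ some constant bounded below $1-H(\rho)$, (b) dual distance strictly greater than $D$, and (c) a polynomial-time decoder correcting a constant fraction $\rho > 0$ of random errors. Sample $m \in \{0,1\}^L$ and $\pi \in S_n$ independently and uniformly, set $c = C(m)$, and for each $i \in [n]$ draw an independent $u_i \sim \mathrm{Unif}[0, 1/(2n))$ and set
\[ x_i \;=\; \tfrac12\, c_{\pi^{-1}(i)} \;+\; \tfrac{1}{2n}\bigl(\pi^{-1}(i)-1\bigr) \;+\; u_i, \]
so that the top bit of $x_i$ is $c_{\pi^{-1}(i)}$ and the next $\log_2 n$ bits encode $\pi^{-1}(i)$. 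Then $\mu$ is $S_n$-invariant because applying $\sigma \in S_n$ to $x$ is equivalent to replacing $\pi$ by $\sigma\pi$, which has the same law. For almost-independence, fix any $k \le D$ distinct indices $i_1,\ldots,i_k$: the encoded positions $(\pi^{-1}(i_\alpha))_\alpha$ form a uniformly random tuple of distinct elements of $[n]$, and conditional on their values being any particular distinct $(j_1,\ldots,j_k)$ the bits $(c_{j_\alpha})_\alpha$ are uniform on $\{0,1\}^k$ by the dual-distance assumption, so the joint law of $(x_{i_\alpha})_\alpha$ differs from the marginal of $\nu$ only by forbidding encoded-position collisions, giving total variation at most $O(k^2/n) = o(1)$.

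The distinguisher extracts from each $x_i$ the pair $(j_i, b_i) \in [n] \times \{0,1\}$, forms $\tilde c \in \{0,1\}^n$ by majority-voting $\tilde c_j = \mathrm{maj}\{b_i : j_i = j\}$ (ties and empty sets broken arbitrarily), runs $C$'s decoder on $\tilde c$, and accepts iff the returned codeword $C(\hat m)$ satisfies $\Delta(\tilde c, C(\hat m)) \le \rho n$. Under $T_\delta\mu$ with $\delta$ a small positive constant, the preserved coordinate $i = \pi(j)$ delivers the correct vote $c_j$ with probability $1 - \delta$, while the $\sim \delta n$ resampled coordinates each contribute a uniformly random pair, so position $j$ receives roughly $\mathrm{Poisson}(\delta)$ extra uniform votes; a short Chernoff/Chebyshev calculation gives $\PP[\tilde c_j \ne c_j] = O(\delta)$, so for small enough $\delta$ the bit-error rate is below $\rho$ and the decoder succeeds w.h.p. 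Under $\nu$ the pairs $(j_i, b_i)$ are i.i.d.\ uniform, so $\tilde c$ is close to uniform on $\{0,1\}^n$; since $L/n < 1 - H(\rho)$, a union bound over $|C| = 2^L$ codewords shows $\tilde c$ is at Hamming distance $> \rho n$ from every codeword except with probability $2^{-\Omega(n)}$, and the test rejects w.h.p. The main technical obstacle is to exhibit a single code simultaneously satisfying constant rate, dual distance $> (\log n)^{1+\Omega(1)}$, and polynomial-time decoding at a constant error rate; a concatenation of an outer algebraic code with an inner expander- or Justesen-style binary code supplies all three, but verifying the dual-distance bound through the concatenation requires some bookkeeping.
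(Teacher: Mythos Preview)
Your high-level strategy---pack an index and a code symbol into each real coordinate, use the code's dual distance for independence and its decoder to beat $T_\delta$---is exactly the paper's. The paper, however, makes two different implementation choices that matter. First, instead of a random permutation $\pi$, it draws the indices $j_1,\ldots,j_n$ \emph{i.i.d.}\ uniformly from $[n]$, and whenever $j_i$ is not unique it sets the accompanying symbol to a fresh uniform value; non-unique indices are later treated as erasures by the decoder. This seemingly minor change buys \emph{exact} $\Omega(n)$-wise independence (any $k$ coordinates are genuinely i.i.d.\ under $\mu$), so the construction satisfies the conjecture's hypothesis regardless of how ``almost independent'' is defined. Second, the paper uses Reed--Solomon over $\FF_q$ with $q=n$ (so each symbol already lives in a size-$n$ alphabet and erasure decoding is classical), rather than a binary code with majority voting; this also removes your bookkeeping worry about dual distance through a concatenation.

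The gap in your version is the almost-independence step. You argue that the $k$-marginal of $\mu$ is $O(k^2/n)$-close in total variation to that of $\nu$, but Hopkins' notion is phrased via the degree-$\le D$ likelihood ratio, and TV-closeness on $k$-marginals does not by itself bound $\|L^{\le D}-1\|$: there are $n^{\Theta(D)}$ basis polynomials, and each can pick up a nonzero coefficient of order $1/\mathrm{poly}(n)$ from the permutation constraint. Indeed, already the ``index'' part of your encoding (a random permutation viewed coordinatewise, versus i.i.d.\ uniform indices) produces degree-$2$ correlations of order $1/n$ across all $\binom{n}{2}$ pairs, and a quick moment computation suggests contributions that do not vanish as $D$ grows. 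So either you would need to carry out a genuine low-degree computation for the random-permutation model, or---much simpler---adopt the paper's i.i.d.\ index trick, which makes the marginals match exactly and sidesteps the issue entirely. Your distinguisher analysis under $T_\delta\mu$ and under $\nu$ is essentially fine, though note that $\tilde c$ is not literally uniform under $\nu$; the paper handles the analogous step with a short lemma (random strings are far from every codeword) that conditions on the index pattern and uses only the uniformly-hit positions.
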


\noindent (The asymptotic notation $\Omega(1)$ is not to be confused with the set $\Omega$.) We now explain some of the terminology used in the conjecture, referring the reader to~\cite{sam-thesis} for the full details. We will be concerned with the case $k=1$, in which case $S_n$-invariance of $\mu$ means that for any $x \in \Omega^n$ and any $\pi \in S_n$ (the symmetric group) we have $\PP_\mu(x) = \PP_\mu(\pi \cdot x)$ where $\pi$ acts by permuting coordinates. The notion of \emph{$D$-wise almost independence} captures how well degree-$D$ polynomials can distinguish $\mu$ and $\nu$. For our purposes, we do not need the full definition of $D$-wise almost independence (see~\cite{sam-thesis}), but only the fact that it is implied by \emph{exact} $D$-wise independence, defined as follows.

\begin{definition}
A distribution $\mu$ on $\Omega^N$ is $D$-wise independent with respect to $\nu$ if for any $S \subseteq [N]$ with $|S| \le D$ we have equality of the marginal distributions $\mu|_S = \nu|_S$.
\end{definition}

\noindent Finally, the \emph{noise operator} $T_\delta$ is defined as follows.

\begin{definition}
Let $\nu$ be a product distribution on $\Omega^N$ and let $\mu$ be another distribution on $\Omega^N$. For $\delta \in [0,1]$, let $T_\delta \mu$ be the distribution on $\Omega^N$ generated as follows. To sample $z \sim T_\delta \mu$, first sample $x \sim \mu$ and $y \sim \nu$ independently, and then, independently for each $i$, let
\[ z_i = \left\{\begin{array}{ll} x_i & \text{with probability } 1-\delta, \\ y_i & \text{with probability } \delta. \end{array}\right. \]
(Note that $T_\delta$ depends on $\nu$ but the notation suppresses this dependence; $\nu$ will always be clear from context.)
\end{definition}

\section{Main Results}

We first give a counterexample that refutes Conjecture~\ref{conj:sam} in the case $\Omega = \RR$.

\begin{theorem}\label{thm:refute}
The following holds for infinitely many $n$. Let $\Omega = \RR$, and $\nu = \mathrm{Unif}([0,1]^n)$. There exists a distribution $\mu$ on $\Omega^n$ such that $\mu$ is $S_n$-invariant (with $k=1$) and $\Omega(n)$-wise independent with respect to $\nu$, and for some constant $\delta > 0$ there exists a polynomial-time computable test distinguishing $T_\delta \mu$ and $\nu$ with probability $1-o(1)$.
\end{theorem}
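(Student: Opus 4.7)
The plan is to exploit the real-valued alphabet by packing two $\mathbb{F}_q$-symbols into each coordinate $x_i \in [0,1]$: a \emph{label} $\alpha_i$ and a \emph{value} $v_i$ linked through a shared random polynomial. Fix a prime power $q \ge n^3$, set $k = \lceil n/10\rceil$, and sample $\mu$ as follows. Draw a uniformly random polynomial $p \in \mathbb{F}_q[X]$ of degree less than $k$, together with independent iid uniform $(\alpha_i, w_i) \in \mathbb{F}_q^2$ for each $i \in [n]$, and set
\[ v_i = \begin{cases} p(\alpha_i) & \text{if $\alpha_i$ appears exactly once in $(\alpha_1,\ldots,\alpha_n)$,}\\ w_i & \text{otherwise.} \end{cases}\]
Let $x_i \in [0,1]$ be the real number whose first $2\lceil\log_2 q\rceil$ bits encode the pair $(\alpha_i,v_i) \in \mathbb{F}_q^2$ with the remaining bits drawn iid uniform. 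The $S_n$-invariance is immediate: the rule for $v_i$ depends only on the multiset $\{\alpha_1,\ldots,\alpha_n\}$ and on the per-coordinate variable $w_i$, so permuting the iid tuple $((\alpha_i,w_i))_i$ yields the same distribution.

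For exact $D$-wise independence at $D = k = \Omega(n)$, I would fix any $D$ coordinates $i_1,\ldots,i_D$ and condition on both the labels $\mathbf{a} = (a_1,\ldots,a_D)$ and on the events $E_j := \{\alpha_\ell \ne a_j \text{ for all } \ell \notin \{i_1,\ldots,i_D\}\}$. This splits $[D]$ into a polynomial group $J_p = \{j : a_j \text{ is unique in } \mathbf{a} \text{ and } E_j \text{ holds}\}$, where $v_{i_j} = p(a_j)$ at distinct inputs, and a fresh group $J_w = [D] \setminus J_p$, where $v_{i_j} = w_{i_j}$. The polynomial-group values are iid uniform by the MDS property---a random degree-$<k$ polynomial evaluated at any $\le k$ distinct points yields iid uniform outputs since the corresponding Vandermonde matrix is invertible---while the fresh-group values are iid uniform by construction, and the two groups are independent because $p$ is independent of the $w_i$'s. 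Averaging over $(E_j)_j$ and then over $\mathbf{a}$ preserves iid uniformity, so combined with the iid uniform padding this gives $\mu|_S = \nu|_S$ for every $|S| \le k$.

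For the distinguisher, I would extract $(\tilde\alpha_i,\tilde v_i)$ from each $z_i$ and run Berlekamp-Welch on the $n$ resulting pairs, outputting $1$ if some polynomial of degree $<k$ matches at least $(n+k)/2$ of them. Under $T_\delta\mu$, a $(1-\delta)$-fraction of pairs equal the original $(\alpha_i,v_i)$; since $q \ge n^3$, the original labels are all distinct with probability $1-O(1/n)$, so in that event $v_i = p(\alpha_i)$ for every $i$ and Berlekamp-Welch succeeds whenever $\delta < (1-k/n)/2$, e.g.\ for $\delta = 2/5$. Under $\nu$ the pairs are iid uniform on $\mathbb{F}_q^2$, and a union bound over the $q^k = n^{3k}$ candidate polynomials together with a Chernoff bound on $\mathrm{Bin}(n,1/q)$ shows that no polynomial agrees with $(n+k)/2$ pairs except with probability $2^{-\Omega(n\log n)}$, so the test correctly outputs $0$ with high probability.

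The hard part is the exact $D$-wise independence calculation: the naive choice $v_i = p(\alpha_i)$ fails because colliding labels force colliding values and break uniformity, and one must verify that the ``fall back to $w_i$ on any collision'' trick, combined with the independence of $p$ and the $w$'s, really does wash out any correlation between polynomial-group and fresh-group values under every conditioning on the label pattern. Once this step is established, the $S_n$-invariance is immediate from the multiset-symmetric definition and the distinguisher analysis reduces to standard error-tolerance of Berlekamp-Welch plus a routine Chernoff/union-bound argument.
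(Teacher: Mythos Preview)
Your proposal is correct and follows the same blueprint as the paper: encode a (label, value) pair into each real coordinate, link the values through a random Reed--Solomon polynomial, fall back to fresh randomness whenever a label collides (this is exactly what yields both $S_n$-invariance and exact $k$-wise independence via the MDS/Vandermonde property, and your conditioning argument for the latter is the right one), and distinguish by decoding.

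The one substantive difference is the size of the label space. The paper draws labels from $[n]$ with $q=n$, so a constant fraction of labels collide; these positions are treated as \emph{erasures}, which forces the paper to prove a separate lemma lower-bounding the number of unique labels and to invoke an errors-plus-erasures Reed--Solomon decoder together with a Hamming-ball packing bound for the $\nu$ side. You instead draw labels from $\mathbb{F}_q$ with $q \ge n^3$, so collisions occur with probability $O(1/n)$ and can simply be absorbed into the $o(1)$ failure term; plain Berlekamp--Welch with errors only then suffices, and your $\nu$-side analysis reduces to the direct union bound over $q^k$ polynomials. Your parameter choice buys a noticeably shorter distinguisher analysis (no erasure bookkeeping, no unique-index lemma); the paper's choice keeps the per-coordinate encoding to $2\log_2 n$ bits rather than $\approx 6\log_2 n$, but since both are $O(\log n)$ this is immaterial. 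Both routes are equally valid proofs of the theorem.
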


\noindent The proof is given in Section~\ref{sec:pf-refute}.

\begin{remark}
We assume a standard model of finite-precision arithmetic over $\RR$, i.e., the algorithm $t$ can access polynomially-many bits in the binary expansion of its input.
\end{remark}

\noindent Note that we refute an even weaker statement than Conjecture~\ref{conj:sam} because our counterexample has \emph{exact} $\Omega(n)$-wise independence instead of only $(\log n)^{1+\Omega(1)}$-wise \emph{almost} independence.

\begin{remark}\label{rem:OU}
Essentially, our counterexample exploits the fact that a single real number can be used to encode a large block of data, and that the noise operator $T_\delta$ will leave many of these blocks untouched (effectively allowing us to use a super-constant alphabet size). We therefore propose modifying Conjecture~\ref{conj:sam} in the case $\Omega = \RR$ by using a different noise operator that applies a small amount of noise to \emph{every} coordinate instead of resampling a small number of coordinates. If $\nu$ is i.i.d.\ $\mathcal{N}(0,1)$ then the standard Ornstein-Uhlenbeck noise operator is a natural choice (and in fact, this is mentioned by~\cite{sam-thesis}). Formally, this is the noise operator $T_\delta$ that samples $T_\delta \mu$ as follows: draw $x \sim \mu$ and $y \sim \nu$ and output $\sqrt{1-\delta} x + \sqrt{\delta} y$.
\end{remark}

Our second result illustrates that in the case where $\Omega$ is a finite set, the $S_n$-invariance assumption cannot be dropped from Conjecture~\ref{conj:sam}. (In stating the original conjecture, Hopkins~\cite{sam-thesis} remarked that he was not aware of any counterexample when the $S_n$-invariance assumption is dropped).

\begin{theorem}\label{thm:sym}
The following holds for infinitely many $n$. Let $\Omega = \{0,1\}$ and $\nu = \mathrm{Unif}(\{0,1\}^n)$. There exists a distribution $\mu$ on $\Omega^n$ such that $\mu$ is $\Omega(n)$-wise independent with respect to $\nu$, and for some constant $\delta > 0$ there exists a polynomial-time computable test distinguishing $T_\delta \mu$ and $\nu$ with probability $1-o(1)$.
\end{theorem}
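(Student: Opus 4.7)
The plan is to take $\mu$ to be the uniform distribution on a binary linear code $C\subseteq\{0,1\}^n$ chosen so that (i) $C$ has dual minimum distance $d^\perp=\Omega(n)$, (ii) $C$ admits a polynomial-time decoder correcting any pattern of at most $\tau n$ errors for some constant $\tau$ with $\delta/2<\tau\le d/(2n)$, where $d$ is the primal minimum distance, and (iii) the rate $R=(\dim C)/n$ satisfies $R+H(\tau)<1$, where $H$ is the binary entropy. Such codes exist for infinitely many $n$: a concatenation of a Reed-Solomon outer code (efficiently decodable via Berlekamp-Welch) with a suitable binary inner code provides (ii) and can be arranged so that both primal and dual distances are linear in $n$ while the rate is strictly less than $1-H(\tau)$; alternatively one may invoke explicit algebraic-geometry codes.

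Given $C$ as above, I first verify $\Omega(n)$-wise independence of $\mu$. For any $S\subseteq[n]$ with $|S|<d^\perp$, there is no nonzero dual codeword supported in $S$, so the coordinate projection $\pi_S\colon C\to\{0,1\}^S$ is surjective; being a linear map between finite groups, all its fibers have equal size, so $\mu|_S$ is the uniform distribution on $\{0,1\}^S=\nu|_S$. The distinguisher $t$ on input $z$ then simply runs the decoder and accepts iff it returns some $\hat{x}\in C$ with $\Delta(z,\hat{x})\le\tau n$.

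Under $T_\delta\mu$, writing $z$ as the noisy version of a hidden $x\sim\mu$, each bit $z_i$ differs from $x_i$ independently with probability $\delta/2<\tau$, so a Chernoff bound gives $\Delta(z,x)\le\tau n$ with high probability; the decoder then returns $\hat{x}=x$ (unique in the Hamming ball of radius $\tau n\le d/2$), and $t$ accepts. Under $\nu$, the vector $z$ is uniform in $\{0,1\}^n$, and a union bound over codewords yields
\[
\Pr_{z\sim\nu}\bigl[\exists c\in C:\Delta(z,c)\le\tau n\bigr] \;\le\; |C|\cdot\sum_{i\le\tau n}\binom{n}{i}\cdot 2^{-n} \;\le\; 2^{(R+H(\tau)-1)n+o(n)} \;=\; o(1),
\]
so with high probability no codeword lies within $\tau n$ of $z$; hence the decoder cannot output one and $t$ rejects.

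The main obstacle is reconciling (i) with the decoding part of (ii): efficient decoders typically arise from sparse parity-check matrices (as in LDPC or expander codes), but a sparse parity check is by definition a low-weight word of $C^\perp$, which would destroy the $\Omega(n)$-wise independence. Concatenated constructions and algebraic-geometry codes decouple these requirements, and once a code satisfying (i)--(iii) is in hand, the rest of the argument is the short calculation above.
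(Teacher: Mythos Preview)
Your proposal is correct and follows the same architecture as the paper: take $\mu$ uniform on a binary linear code with linear dual distance and efficient decoding, use dual distance to get $\Omega(n)$-wise independence, and distinguish by decoding. Two minor points of contrast: the paper sidesteps your condition~(iii) by bounding $\Pr_{z\sim\nu}[\exists c\in C:\Delta(z,c)\le r]$ via a packing argument (balls of radius $2r$ around codewords are disjoint, so the probability is at most $|B_r|/|B_{2r}|$), which needs no rate information; and rather than sketching a concatenated or AG construction, the paper simply cites an explicit family due to Guruswami with dual distance $\ge \zeta n$ and efficient decoding from $\zeta n/2$ errors, which is exactly the black box you flagged as the main obstacle.
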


\noindent The proof is given in Section~\ref{sec:pf-sym}.

Both of our counterexamples are in fact still valid in the presence of a stronger noise operator $T_\delta$ that \emph{adversarially} changes any $\delta$-fraction of the coordinates.

The rest of the paper is organized as follows. Our counterexamples are based on error-correcting codes, so in Section~\ref{sec:coding} we review the basic notions from coding theory that we will need. In Section~\ref{sec:proofs} we construct our counterexamples and prove our main results.

\section{Coding Theory Preliminaries}
\label{sec:coding}

Let $F = \FF_q$ be a finite field. A \emph{linear code} $C$ (over $F$) is a linear subspace of $F^n$. Here $n$ is called the \emph{(block) length}, and the elements of $C$ are called \emph{codewords}. The \emph{distance} of $C$ is the minimum Hamming distance between two codewords, or equivalently, the minimum Hamming weight of a nonzero codeword.

\begin{definition}
Let $C$ be a linear code. The \emph{dual distance} of $C$ is the minimum Hamming weight of a vector in $F^n$ that is orthogonal to all codewords. Equivalently, the dual distance is the distance of the \emph{dual code} $C^\perp = \{x \in F^n \,:\, \langle x,c \rangle = 0 \;\;\forall c \in C\}$.
\end{definition}

\noindent The following standard fact will be essential to our arguments.

\begin{proposition}\label{prop:dual-k-wise}
If $C$ is a linear code with dual distance $d$, then the uniform distribution over codewords is $(d-1)$-wise independent with respect to the uniform distribution on $F^n$.
\end{proposition}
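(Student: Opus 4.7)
The plan is to reduce the statement to a surjectivity claim about the coordinate projection map, and then dualize it to invoke the dual distance hypothesis.

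Fix $S \subseteq [n]$ with $|S| \le d-1$. Write $\mu$ for the uniform distribution on $C$ and $\nu$ for the uniform distribution on $F^n$. The marginal $\nu|_S$ is clearly uniform on $F^S$, so it suffices to show that $\mu|_S$ is also uniform on $F^S$. Let $\pi_S : C \to F^S$ denote the coordinate projection. This is a linear map between finite-dimensional $F$-vector spaces, so if I can show $\pi_S$ is surjective, then every fiber of $\pi_S$ has the same size $|C|/|F|^{|S|}$, from which it follows immediately that pushing the uniform distribution on $C$ through $\pi_S$ yields the uniform distribution on $F^S$.

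So the key step is to prove that $\pi_S(C) = F^S$. I will argue this by duality: if $\pi_S(C)$ were a proper subspace of $F^S$, then there would exist a nonzero linear functional on $F^S$ vanishing on $\pi_S(C)$, i.e., a nonzero $y \in F^S$ with $\sum_{i \in S} y_i c_i = 0$ for every $c \in C$. Extending $y$ by zeros outside $S$ produces a nonzero vector $\tilde y \in F^n$ of Hamming weight at most $|S| \le d-1$ that is orthogonal to every codeword, hence $\tilde y \in C^\perp$. But this contradicts the assumption that the dual distance of $C$ is $d$. Therefore $\pi_S$ is surjective, finishing the argument.

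There is no real obstacle here; the only thing to be careful about is the direction of the duality argument (ensuring that the vanishing functional really does extend to a low-weight element of $C^\perp$, which it does because the functional is supported on $S$). The proof is essentially a two-line exercise once one identifies that $(d-1)$-wise independence is the same as surjectivity of every coordinate projection of size at most $d-1$.
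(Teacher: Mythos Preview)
Your proof is correct and follows essentially the same approach as the paper: both reduce to showing that the coordinate projection $\pi_S$ onto any $|S| \le d-1$ positions is surjective, and both establish surjectivity by contradiction, observing that a failure would produce a nonzero vector supported on $S$ lying in $C^\perp$, violating the dual distance. The paper phrases this via a generator matrix (linear independence of the columns of $G_S$) whereas you argue abstractly via a vanishing functional, but the underlying argument is identical.
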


\begin{proof}
This is a standard fact in coding theory, but we give a proof here for completeness. Fix $S \subseteq [n]$ with $|S| \le d-1$. For some $k$, we can write $C = \{x^\top G \,:\, x \in F^k\}$ for some $k \times n$ \emph{generator matrix} $G$ whose rows form a basis for $C$. Let $G_S$ be the $k \times |S|$ matrix obtained from $G$ by keeping only the columns in $S$. It is sufficient to show that if $x$ is drawn uniformly from $F^k$ then $x^\top G_S$ is uniform over $F^{|S|}$. The columns of $G_S$ must be linearly independent, because otherwise there is a vector $y \in F^n$ of Hamming weight $\le d-1$ such that $Gy = 0$, implying $y \in C^\perp$, which contradicts the dual distance. Thus there exists a set $T \subseteq [k]$ of $|S|$ linearly independent rows of $G_S$ (which form a basis for $F^{|S|}$). For any fixed choice of $x_{[k] \setminus T}$ (i.e., the coordinates of $x$ outside $T$), if the coordinates $x_T$ are chosen uniformly at random then $x^\top G_S$ is uniform over $F^{|S|}$. This completes the proof.
\end{proof}

\begin{definition}
  A code $C$ admits efficient decoding from $r$ errors and $s$ erasures if there is a deterministic polynomial-time algorithm $\mathcal{D}: (F \cup \{\perp\})^n \to F^n \cup \{\mathsf{fail}\}$ with the following properties.
  \begin{itemize}
  \item For any codeword $c \in C$, let $c' \in F^n$ be any vector obtained from $c$ by changing the values of at most $r$ coordinates (to arbitrary elements of $F$), and replacing at most $s$ coordinates with the erasure symbol $\perp$. Then $\mathcal{D}(c') = c$.
  \item For any arbitrary $c' \in F^n$ (not obtained from some codeword as
    above), $\mathcal{D}(c')$ can output any codeword or $\mathsf{fail}$ (but
    must never output a vector that is not a codeword\footnote{The codes we
      deal with in this paper can be efficiently constructed, and so it is
      easy to test whether a given vector is a codeword. Thus, this assumption
      is without loss of generality.}).
  \end{itemize}
\end{definition}
\noindent Note that the decoding algorithm knows where the erasures have occurred but does not know where the errors have occurred.

Our first counterexample (Theorem~\ref{thm:refute}) is based on the classical \emph{Reed-Solomon codes} $\mathsf{RS}_q(n, k)$, which consist of univariate polynomials of degree at most $k$ evaluated at $n$ canonical elements of the field $\FF_q$, and are known to have the following properties.

\begin{proposition}[Reed-Solomon Codes]
\label{prop:RS}
For any integers $0 \le k < n$ and for any prime power $q \ge n$, there is a length-$n$ linear code $C$ over $\FF_q$ with the following properties:
\begin{itemize}
    \item the dual distance is $k+2$,
    \item $C$ admits efficient decoding from $r$ errors and $s$ erasures whenever $2r+s < n-k$.
\end{itemize}
\end{proposition}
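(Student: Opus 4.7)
The plan is to construct the Reed-Solomon code explicitly and then verify the two bulleted properties using standard coding-theory arguments. Since $q \ge n$, I would fix any $n$ distinct elements $\alpha_1, \ldots, \alpha_n \in \FF_q$ and define $C := \{(f(\alpha_1), \ldots, f(\alpha_n)) \,:\, f \in \FF_q[X],\ \deg f \le k\}$. This is a linear subspace of $\FF_q^n$ of dimension $k+1$, since a nonzero polynomial of degree $\le k < n$ cannot vanish at $n$ distinct points.

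For the dual distance, I would show that any nonzero $v \in C^\perp$ has Hamming weight at least $k+2$. Orthogonality to $C$ means $\sum_{i=1}^n v_i f(\alpha_i) = 0$ for every polynomial $f$ of degree $\le k$; plugging in the monomials $f(X) = X^j$ for $j = 0, 1, \ldots, k$ yields the relations $\sum_{i \in S} v_i \alpha_i^j = 0$, where $S$ denotes the support of $v$. If $|S| \le k+1$, this is a homogeneous linear system in the variables $\{v_i\}_{i \in S}$ whose coefficient matrix is a Vandermonde matrix with distinct nodes $\{\alpha_i\}_{i \in S}$, hence invertible, forcing $v|_S = 0$ and contradicting $v \ne 0$. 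Therefore the dual distance is at least $k+2$; the matching upper bound, which we do not need, follows from the Singleton bound.

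For efficient decoding, I would invoke the classical Berlekamp-Welch algorithm (or equivalently Gao's algorithm). Given a received word $c' \in (\FF_q \cup \{\perp\})^n$ obtained from some codeword by introducing at most $r$ errors and $s$ erasures with $2r + s < n - k$, one solves by linear algebra over $\FF_q$ for a pair of polynomials $(E, N)$ satisfying $E \not\equiv 0$, $\deg E \le r$, $\deg N \le r + k$, and $N(\alpha_i) = c'_i \cdot E(\alpha_i)$ at every non-erased coordinate $i$. Any such solution has $f := N/E$ equal to the unique degree-$\le k$ polynomial consistent with the allowed errors-and-erasures pattern, and the condition $2r + s < n - k$ is precisely what guarantees both the existence of a valid pair and the uniqueness of the recovered $f$. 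Since $C$ is efficiently constructible, the decoder can verify its output and return $\mathsf{fail}$ otherwise.

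There is no real obstacle here; every step is textbook. The only care needed is to make the parameter arithmetic match the statement (dual distance exactly $k+2$ and the decoding threshold $2r + s < n - k$), which it does because the primal distance of $C$ equals $n - k$ and the Vandermonde argument pins the dual distance to $k+2$.
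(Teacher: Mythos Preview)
Your argument is correct; every step is indeed textbook and the parameter arithmetic lines up. The paper's own proof is much terser: it simply cites a reference for the construction and decoding of Reed--Solomon codes, and for the dual distance it invokes the structural fact that the dual of $\mathsf{RS}_q(n,k)$ is $\mathsf{RS}_q(n,n-k-2)$ (hence has distance $k+2$). Your route is different in that you prove the dual-distance lower bound directly via a Vandermonde argument rather than by identifying the dual code; this is slightly more elementary and, as a bonus, works for any choice of distinct evaluation points, whereas the ``dual of RS is RS'' identity as stated requires a particular evaluation set. Likewise, where the paper points to the literature for decoding, you sketch Berlekamp--Welch explicitly. The trade-off is self-containment versus brevity: your version could stand alone without the cited references, while the paper's version is a two-line appeal to standard facts.
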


\begin{proof}
See e.g., \cite{GS-RS} for the construction and basic facts regarding Reed-Solomon codes $\mathsf{RS}_q(n,k)$. The distance of $\mathsf{RS}_q(n,k)$ is $n-k$. It is well known that the dual code of $\mathsf{RS}_q(n,k)$ is $\mathsf{RS}_q(n,n-k-2)$, which proves the claim about dual distance. Efficient decoding is discussed e.g., in Section~3.2 of~\cite{GS-RS}.
\end{proof}

\noindent Our second counterexample (Theorem~\ref{thm:sym}) is based on the following construction of efficiently correctable binary codes with large dual distance. A proof (by Guruswami) can be found in~\cite[Theorem~4]{good-dual-codes}. Similar results were proved earlier~\cite{SV-code,FR-code,GS-code}.

\begin{proposition}
\label{prop:good-dual}
There exists a universal constant $\zeta \ge 1/30$ such that for every integer
$i \ge 1$ there is a linear code $C$ over $\FF_2 = \{0,1\}$ of block length
$n = 42 \cdot 8^{i+1}$, with the following properties:
\begin{itemize}
    \item the dual distance is at least $\zeta n$, and
    \item $C$ admits efficient decoding from $\zeta n/2$ errors (with no erasures).
\end{itemize}
\end{proposition}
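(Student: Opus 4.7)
The plan is to build $C$ by a concatenated construction, combining a constant-size base code found by brute force with a polynomial-time-constructible amplification. First, pick a universal constant $n_0$ (e.g.\ $n_0 = 42$) and use brute-force search to locate a binary linear code $C_0 \subseteq \FF_2^{n_0}$ of rate $1/2$ such that both $C_0$ and $C_0^\perp$ have relative distance at least some constant $\beta > 0$. A uniformly random rate-$1/2$ binary linear code has $C^\perp$ also uniformly random, so the Gilbert-Varshamov bound applied independently to $C_0$ and $C_0^\perp$ guarantees positive probability of success; hence the brute-force search over all $2^{O(n_0^2)}$ generator matrices terminates in constant time, and brute-force decoding of $C_0$ from $\beta n_0 / 2$ errors is likewise constant-time.

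To reach length $n = 42 \cdot 8^{i+1}$, amplify $C_0$ using a Reed-Solomon outer code over $\FF_{2^k}$ together with an expander-based interleaver (AEL-style) that boosts distance while preserving structure. Plain concatenation alone fails: if the inner code has length $n_{\mathrm{in}}$ and dual distance $d^\perp_{\mathrm{in}} \le n_{\mathrm{in}}$, then placing any low-weight codeword of $C_{\mathrm{in}}^\perp$ inside one outer block already yields a codeword of the dual of the concatenation of weight $\le n_{\mathrm{in}}$, so the relative dual distance decays like $1/N$ in the outer length. The role of the expander step is precisely to break this localization: spreading each ``inner block'' across $\Omega(n)$ coordinates of the final code forces any low-weight dual codeword of $C$ to project onto a low-weight dual codeword of either the outer or the inner component, a contradiction. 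The factor-$8$ blow-up per amplification level matches the $8^{i+1}$ in the stated block length.

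For efficient decoding, undo the interleaver, brute-force decode each inner block (constant time per block since the inner length is constant), and then feed the resulting outer-symbol errors into the polynomial-time Reed-Solomon decoder of Proposition~\ref{prop:RS}. Expansion controls the number of outer-symbol errors arising from any adversarial bit-error pattern, so one can calibrate $\zeta$ to make this scheme correct up to $\zeta n / 2$ worst-case bit errors, with $\zeta$ a universal constant that ends up being at least $1/30$ after routine bookkeeping.

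The main obstacle is the dual-distance analysis of the amplification: constructing binary codes that simultaneously have constant rate, constant relative primal distance, constant relative dual distance, and efficient decoding is nontrivial because the standard composition primitives (concatenation, tensor product) all produce small dual codewords by localizing a dual codeword of one factor. Avoiding this localization is exactly what requires a genuinely pseudorandom amplification, and it is the step I would rely on Guruswami's analysis in~\cite[Theorem~4]{good-dual-codes} for, rather than re-derive.
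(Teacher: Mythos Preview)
The paper does not actually prove this proposition: it is quoted as a black box from~\cite[Theorem~4]{good-dual-codes}, with related earlier constructions cited as~\cite{SV-code,FR-code,GS-code}. Since your proposal explicitly defers the one nontrivial step (the dual-distance analysis of the amplification) to that same reference, you are, in the end, doing exactly what the paper does.

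That said, the construction you sketch around the citation is probably not Guruswami's. The very specific block length $n = 42 \cdot 8^{i+1}$ is a strong hint: with $42 = 6 \cdot 7$, $6 = \log_2 64$, and $7 = 8-1$, this is the fingerprint of an algebraic-geometry code over $\FF_{64} = \FF_{8^2}$ coming from the Garcia--Stichtenoth tower (where both the code and its dual are AG codes with good relative distance), expanded to binary; it is not what an iterated brute-force-seed-plus-AEL pipeline would produce. Two minor points on your sketch itself: (i) a random rate-$1/2$ linear code and its dual are not independent random subspaces, so ``apply Gilbert--Varshamov independently'' is imprecise, though the conclusion survives by a union bound over the two marginal GV events; (ii) your heuristic for why expander interleaving rescues dual distance (``any low-weight dual codeword of $C$ projects onto a low-weight dual codeword of a component'') is precisely the delicate claim that needs proof, and you rightly flag it as the part you would not re-derive. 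None of this constitutes a gap in your argument, since both your proof and the paper's ultimately reduce to the citation.
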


\section{Proofs}
\label{sec:proofs}

Before proving the main results, we state some prerequisite notation and lemmas.

\begin{definition}
Let $F$ be a finite field. For $S \subseteq[n]$ (representing erased positions), the \emph{$S$-restricted Hamming distance} $\Delta_S(x,y)$ is the number of coordinates in $[n] \setminus S$ where $x$ and $y$ differ:
\[ \Delta_S(x,y) = |\{i \in [n] \setminus S \,:\, x_i \ne y_i\}|. \]
We allow $x,y$ to belong to either $F^n$ or $F^{[n] \setminus S}$, or even to $(F \cup \{\perp\})^n$ so long as the ``erasures'' $\perp$ occur only in $S$.
\end{definition}

\noindent The following lemma shows that a random string is sufficiently far from any codeword.

\begin{lemma}\label{lem:rand-far}
Let $C$ be a length-$n$ linear code over a finite field $F$. Suppose $C$ admits efficient decoding from $2r$ errors and $s$ erasures, for some $r,s$ satisfying $r \le (n-s)/(8e)$. For any fixed choice of at most $s$ erasure positions $S \subseteq [n]$, if $x$ is a uniformly random element of $F^{[n] \setminus S}$ then
\[ \mathbb{P}_x(\exists c \in C \,:\, \Delta_S(c,x) \le r) \le (r+1) 2^{-r}. \]
\end{lemma}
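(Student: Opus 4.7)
\medskip

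\noindent\textbf{Proof proposal.} The plan is to convert the decoding hypothesis into a packing statement in the $S$-restricted metric and then apply a standard volume bound.

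First I would extract the following geometric consequence of efficient decoding from $2r$ errors and $s$ erasures: for the fixed erasure set $S$ (with $|S| \le s$), any two distinct codewords $c_1, c_2 \in C$ satisfy $\Delta_S(c_1, c_2) \ge 4r+1$. Otherwise, split the coordinates in $[n]\setminus S$ where $c_1$ and $c_2$ disagree into two halves, each of size at most $2r$; let $c'$ be the vector that agrees with $c_1$ on the first half and with $c_2$ on the second half (and with both codewords on the remaining positions of $[n]\setminus S$), and carries the erasure symbol on $S$. Then $c'$ would be a valid input with at most $2r$ errors and $|S| \le s$ erasures that is consistent with both $c_1$ and $c_2$, contradicting unique decoding.

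Next, let $N := n - |S| \ge n - s$ and identify each codeword with its restriction in $F^N$. The previous step shows these restrictions are distinct points in $F^N$ whose pairwise Hamming distance exceeds $4r$, so the Hamming balls $B_{2r}$ of radius $2r$ around them are pairwise disjoint in $F^N$. This yields the packing bound
\[ |C| \cdot |B_{2r}| \;\le\; |F|^N. \]
A union bound over codewords, combined with this inequality, then gives
\[ \mathbb{P}_x\bigl(\exists c \in C : \Delta_S(c,x) \le r\bigr) \;\le\; |C| \cdot \frac{|B_r|}{|F|^N} \;\le\; \frac{|B_r|}{|B_{2r}|}. \]

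Finally I would estimate the ratio. Using $|B_r| \le (r+1)\binom{N}{r}(|F|-1)^r$ and $|B_{2r}| \ge \binom{N}{2r}(|F|-1)^{2r}$ together with the standard bounds $\binom{N}{r} \le (eN/r)^r$ and $\binom{N}{2r} \ge (N/(2r))^{2r}$, the ratio is at most
\[ (r+1) \cdot \frac{(eN/r)^r}{(N/(2r))^{2r} (|F|-1)^r} \;\le\; (r+1)\left(\frac{4er}{N}\right)^r, \]
since $|F|-1 \ge 1$. The hypothesis $r \le (n-s)/(8e) \le N/(8e)$ ensures $4er/N \le 1/2$, producing the claimed bound $(r+1) 2^{-r}$.

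The conceptual step is the first one (deriving the $4r+1$ lower bound on the $S$-restricted distance between codewords from the decoding assumption); the remaining work is a routine counting/volume estimate. I do not anticipate a serious obstacle.
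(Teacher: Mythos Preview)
Your proposal is correct and follows essentially the same argument as the paper: use the decoding assumption to conclude that the $S$-restricted Hamming balls of radius $2r$ around codewords are disjoint, combine the resulting packing bound $|C|\cdot|B_{2r}|\le |F|^{N}$ with a union bound to get $\mathbb{P}\le |B_r|/|B_{2r}|$, and estimate this ratio via the standard binomial inequalities. The only difference is cosmetic---you spell out the $\Delta_S(c_1,c_2)\ge 4r+1$ argument and the union-bound/packing step explicitly, whereas the paper states the disjointness and the resulting bound $|B_r|/|B_{2r}|$ directly.
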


\begin{proof}
Let $B_r(c) = \{x \in F^{[n] \setminus S} \,:\, \Delta_S(c,x) \le r\} \subseteq F^{[n] \setminus S}$ denote the Hamming ball (with erasures $S$) of radius $r$ and center $c$, and let $|B_r|$ denote its cardinality (which does not depend on $c$). We have the following basic bounds on $|B_r|$:
\[ \binom{n-|S|}{r}(|F|-1)^r \le |B_r| \le (r+1)\binom{n-|S|}{r}(|F|-1)^r. \]
Since decoding from $2r$ errors and $s$ erasures is possible, the Hamming balls $\{B_{2r}(c)\}_{c \in C}$ are disjoint, and so
\begin{align*}
\mathbb{P}_x(\exists c \in C \,:\, \Delta_S(c,x) \le r) &\le |B_r|/|B_{2r}| \\
&\le (r+1) \binom{n-|S|}{r} \binom{n-|S|}{2r}^{-1} (|F|-1)^{-r} \\
&\le (r+1) \binom{n-|S|}{r} \binom{n-|S|}{2r}^{-1}.
\intertext{Using the standard bounds $\left(\frac{n}{k}\right)^k \le \binom{n}{k} \le \left(\frac{ne}{k}\right)^k$, this becomes}
&\le (r+1) \left(\frac{4e r}{n-|S|}\right)^r \\
&\le (r+1) \left(\frac{4e r}{n-s}\right)^r
\end{align*}
which is at most $(r+1)2^{-r}$ provided $r \le (n-s)/(8e)$.
\end{proof}

\begin{lemma}\label{lem:unique}
Let $j_1,\ldots,j_n$ be uniformly and independently chosen from $[n]$. For any constant $\alpha < 1/e$, the number of indices $i \in [n]$ that occur exactly once among $j_1,\ldots,j_n$ is at least $\alpha n$ with high probability.
\end{lemma}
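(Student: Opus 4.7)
The plan is to compute the expected number of singleton indices and then apply a bounded-differences concentration inequality. Let $X = X(j_1,\ldots,j_n)$ denote the number of indices $i \in [n]$ that appear exactly once among $j_1,\ldots,j_n$, and for each $i$ let $I_i$ be the indicator of this event, so $X = \sum_{i=1}^n I_i$.

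First I would compute $\mathbb{E}[X]$. For any fixed $i$, the probability that exactly one of the $j_k$ equals $i$ is $\binom{n}{1}\frac{1}{n}\bigl(1 - \tfrac{1}{n}\bigr)^{n-1} = \bigl(1 - \tfrac{1}{n}\bigr)^{n-1}$, which tends to $1/e$. Summing, $\mathbb{E}[X] = n\bigl(1-\tfrac{1}{n}\bigr)^{n-1} = (1+o(1))\,n/e$. In particular, for any constant $\alpha < 1/e$, one can fix $\varepsilon > 0$ with $\alpha + 2\varepsilon < 1/e$, and then $\mathbb{E}[X] \ge (\alpha + \varepsilon)\,n$ for all sufficiently large $n$.

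Next I would establish concentration via McDiarmid's inequality on the independent random variables $j_1,\ldots,j_n$. The key observation is that altering a single coordinate $j_k$ from some value $a$ to some value $b$ changes only the multiplicities of $a$ and $b$ among the $j_\ell$'s; the indicators $I_i$ for $i \notin \{a,b\}$ are unaffected, so $X$ can change by at most $2$. McDiarmid's inequality therefore yields
\[
\Pr\bigl[\,X < \mathbb{E}[X] - \varepsilon n\,\bigr] \;\le\; \exp\!\left(-\frac{2(\varepsilon n)^2}{n \cdot 2^2}\right) \;=\; \exp\!\left(-\frac{\varepsilon^2 n}{2}\right) \;=\; o(1).
\]
Combining with the estimate $\mathbb{E}[X] \ge (\alpha + \varepsilon)\,n$ gives $X \ge \alpha n$ with high probability, as required.

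There is no real obstacle here; the only point that needs a little care is verifying that the bounded-differences constant is genuinely $O(1)$ (here, $2$) rather than growing with $n$, which follows from the observation above that a single-coordinate change only perturbs two of the indicator variables $I_i$.
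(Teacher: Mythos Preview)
Your proof is correct but takes a different route from the paper. The paper computes the variance of $X$ directly via $\EE[X_i^2]$ and $\EE[X_i X_{i'}]$, shows $\mathrm{Var}(X) = o(n^2)$, and finishes with Chebyshev's inequality. You instead observe a Lipschitz/bounded-differences property of $X$ with respect to the independent inputs $j_1,\ldots,j_n$ and apply McDiarmid. Your approach avoids the pairwise covariance computation and yields an exponentially small failure probability $\exp(-\Omega(n))$, whereas the paper's second-moment argument only gives $\Pr[X < \alpha n] = o(1)$. Both arguments are standard and sufficient here, but yours is arguably slicker and gives a quantitatively stronger bound; the paper's has the minor advantage of being more self-contained (no appeal to McDiarmid).
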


\begin{proof}
Let $X_i \in \{0,1\}$ be the indicator that $i$ occurs exactly once among $j_1,\ldots,j_n$, and let $X = \sum_{i=1}^n X_i$. We have (as $n \to \infty$)
\[ \EE[X_i] = n(1/n)(1-1/n)^{n-1} \to e^{-1}, \]
$\EE[X_i^2] = \EE[X_i]$, and for $i \ne i'$,
\[ \EE[X_i X_{i'}] = n(n-1)(1/n)^2(1-2/n)^{n-2} \to e^{-2}. \]
This means $\EE[X] = (1+o(1)) n/e$ and
\begin{align*}
\mathrm{Var}(X) &= \sum_i (\EE[X_i^2] - \EE[X_i]^2) + \sum_{i \ne i'} (\EE[X_i X_{i'}] - \EE[X_i] \EE[X_{i'}]) \\
&\le n(1+o(1))(e^{-1} - e^{-2}) + n(n-1) \cdot o(1) \\
&= o(n^2).
\end{align*}
The result now follows by Chebyshev's inequality.
\end{proof}

\subsection{Proof of Theorem~\ref{thm:refute}}
\label{sec:pf-refute}

The idea of the proof is as follows. First imagine the setting where $\mu$ is
not required to be $S_n$-invariant. By using each real number to encode an
element of $F = \FF_q$, we can take $\nu$ to be the uniform distribution on
$F^n$ and take $\mu$ to be a random Reed-Solomon codeword in $F^n$. Under
$T_\delta \mu$, the noise operator will corrupt a few symbols (``errors''), but
the decoding algorithm can correct these and thus distinguish $T_\delta \mu$
from $\nu$.

In order to have $S_n$-invariance, we need to modify the construction. Instead of observing an ordered list $y = (y_1,\ldots,y_n)$ of symbols, we will observe pairs of the form $(i,y_i)$ (with each pair encoded by a single real number) where $i$ is a random index. If the same $i$ value appears in two different pairs, this gives conflicting information; we deal with this by simply throwing it out and treating $y_i$ as an ``erasure'' that the code needs to correct. If some $i$ value does not appear in any pairs, we also treat this as an erasure. The full details are given below.

\begin{proof}[Proof of Theorem~\ref{thm:refute}]
Let $C$ be the length-$n$ code from Proposition~\ref{prop:RS} with $k = \lceil \alpha n \rceil$ for some constant $\alpha \in (0,1)$ to be chosen later. Fix a scheme by which a real number encodes a tuple $(j,y)$ with $j \in [n]$ and $y \in F = \FF_q$, in such a way that a uniformly random real number in $[0,1]$ encodes a uniformly random tuple $(j,y)$. More concretely, we can take $n = q = 2^m$ for some integer $m \ge 1$, in which case $(j,y)$ can be directly encoded using the first $2m$ bits of the binary expansion of a real number. Under $x \sim \nu$, each coordinate $x_i$ encodes an independent uniformly random tuple $(j_i,y_i)$. Under $\mu$, let each coordinate $x_i$ be a uniformly random encoding of $(j_i,y_i)$, drawn as follows. Let $\tilde c$ be a uniformly random codeword from $C$. Draw $j_1,\ldots,j_n \in [n]$ independently and uniformly. For each $i$, if $j_i$ is a unique index (in the sense that $j_i \ne j_{i'}$ for all $i' \ne i$) then set $y_i = \tilde c_{j_i}$; otherwise choose $y_i$ uniformly from $F$.

Note that $\mu$ is $S_n$-invariant. Since the dual distance of $C$ is
$k+2 = \Omega(n)$, it follows (using Proposition~\ref{prop:dual-k-wise}) that
$\mu$ is $\Omega(n)$-wise independent with respect to $\nu$. By choosing
$\delta > 0$ and $\alpha > 0$ sufficiently small, we can ensure that
$16 \delta n + (2n/3 + 4\delta n) < n - k$ and so $C$ admits efficient decoding
from $8 \delta n$ errors and $2n/3 + 4\delta n$ erasures (see
Proposition~\ref{prop:RS}). The algorithm to distinguish $T_\delta \mu$ and
$\nu$ is as follows. Given a list of $(j_i,y_i)$ pairs, produce $c' \in F^n$ by
setting $c'_{j_i} = y_i$ wherever $j_i$ is a unique index (in the above sense),
and setting all other positions of $c'$ to $\bot$ (an ``erasure''). Let
$S \subseteq [n]$ be the indices $i$ for which $c'_i = \perp$. Run the decoding
algorithm on $c'$; if it succeeds and outputs a codeword $c$ such that
$\Delta_S(c,c') \le 4 \delta n$ then output ``$T_\delta \mu$'', and otherwise
output ``$\nu$''.

We can prove correctness as follows. If the true distribution is $\nu$ then Lemma~\ref{lem:unique} guarantees $|S| \le 2n/3$ (w.h.p.). Since the non-erased values in $c'$ are uniformly random, Lemma~\ref{lem:rand-far} ensures there is no codeword $c$ with $\Delta_S(c,c') \le 4\delta n$ (w.h.p.), provided we choose $\delta \le 1/(96e)$, and so our algorithm outputs ``$\nu$'' (w.h.p).

Now suppose the true distribution is $T_\delta \mu$. In addition to the
$\le 2n/3$ erasures caused by non-unique $j_i$'s sampled from $\mu$, each
coordinate resampled by $T_\delta$ can create up to $2$ additional erasures and
can also create up to $2$ errors (i.e., coordinates $i$ for which $c_i' \ne \perp$ but
$c_i' \ne \tilde c_i$). Since at most $2\delta n$ coordinates get resampled
(w.h.p.), this means we have a total of (up to) $4\delta n$ errors and
$2n/3 + 4\delta n$ erasures.  This means decoding succeeds and outputs
$\tilde c$ (i.e., the true codeword used to sample $\mu$), and furthermore,
$\Delta_S(\tilde c,c') \le 4 \delta n$.
\end{proof}

\subsection{Proof of Theorem~\ref{thm:sym}}
\label{sec:pf-sym}

The idea of the proof is similar to the previous proof, and somewhat simpler (since we do not need $S_n$-invariance). We take $\nu$ to be the uniform distribution on binary strings and take $\mu$ to be the uniform distribution on codewords, using the binary code from Proposition~\ref{prop:good-dual}. The decoding algorithm is able to correct the errors caused by $T_\delta$.

\begin{proof}[Proof of Theorem~\ref{thm:sym}]
Let $C$ be the code from Proposition~\ref{prop:good-dual}. Each codeword $c \in C$ is an element of $\FF_2^n$, which can be identified with $\{0,1\}^n = \Omega^n$. Let $\mu$ be the uniform distribution over codewords. Since the dual distance of $C$ is at least $\zeta n$, we have from Proposition~\ref{prop:dual-k-wise} that $\mu$ is $\Omega(n)$-wise independent with respect to the uniform distribution $\nu$. We also know that $C$ admits efficient decoding from $\zeta n/2$ errors.

Let $\delta = \min\{1/(16e),\zeta/8\}$. The algorithm to distinguish $T_\delta \mu$ and $\nu$, given a sample $c'$, is to run the decoding algorithm on $c'$; if decoding succeeds and outputs a codeword $c$ such that $\Delta(c,c') \le 2 \delta n$ then output ``$T_\delta \mu$'', and otherwise output ``$\nu$''.

We can prove correctness as follows. If $c'$ is drawn from $T_\delta \mu$ then $c'$ is separated from some codeword $c$ by at most $2\delta n \le \zeta n/4$ errors (w.h.p.), and so decoding will find $c$. If instead $c'$ is drawn from $\nu$ then (since $\delta \le 1/(16e)$) by Lemma~\ref{lem:rand-far}, there is no codeword within Hamming distance $2\delta n$ of $c'$ (w.h.p.).
\end{proof}

\section*{Acknowledgments}

We thank Sam Hopkins and Tim Kunisky for comments on an earlier draft.

\bibliographystyle{alpha}
\bibliography{main}

\end{document}